\definecolor{pinegreen}{rgb}{0.0,0.47,0.44}
\begin{document}
\title{A Ratio-Based Shapley Value for Collaborative Machine Learning -- Extended Version}
\titlerunning{A Ratio-Based Shapley Value for Collaborative Machine Learning}
\author{Björn Filter\inst{1}\orcidID{0009-0008-8666-6239} \and Ralf Möller \inst{1}\orcidID{0000-0002-1174-3323} \and
Özgür Lütfü Özçep \inst{1}\orcidID{0000-0001-7140-2574} 
}
\authorrunning{B. Filter et al.}
%
\institute{Institute for Humanities-Centered AI (CHAI), University of Hamburg, Germany
\email{\{bjoern.filter, ralf.moeller, oezguer.oezcep\}@uni-hamburg.de}
}
\maketitle    
\begin{abstract}
Collaborative machine learning enables multiple data owners to jointly train models for improved predictive performance. However, ensuring incentive compatibility and fair contribution-based rewards remains a critical challenge. Prior work by Sim and colleagues \cite{sim2020collaborative} addressed this by allocating model rewards, which are non-monetary and freely replicable, based on the Shapley value of each party’s data contribution, measured via information gain. In this paper, we introduce a ratio-based Shapley value that replaces the standard additive formulation with a relative contribution measure. While our overall reward framework,including the incentive definitions and model-reward setting, remains aligned with that of Sim and colleagues, the underlying value function is fundamentally different. Our alternative valuation induces a different distribution of model rewards and offers a new lens through which to analyze incentive properties. We formally define the ratio-based value and prove that it satisfies the same set of incentive conditions as the additive formulation, including adapted versions of fairness, individual rationality, and stability. Like the original approach, our method faces the same fundamental trade-offs between these incentives. Our contribution is a mathematically grounded alternative to the additive Shapley framework, potentially better suited to contexts where proportionality among contributors is more meaningful than additive differences.

\keywords{Cooperative Game Theory \and Reward Allocation \and Mechanism Design \and Model and Data Sharing \and Fairness}
\end{abstract}


\section{Introduction}

Collaborative learning involves multiple participants contributing data to jointly train a machine learning model from which each participant can benefit. As artificial intelligence systems increasingly rely on multi-agent collaboration between institutions, individuals, or machines, ensuring fair and incentive-compatible mechanisms for cooperation becomes essential. However, this expansion simultaneously introduces complex challenges within AI safety \cite{hendrycks25AISafety}, specifically AI alignment \cite{russell22artificial} and cooperative AI \cite{conitzer23foundationsOFCooperativeAI,conitzer24socialChoice}. Even when the global goal is shared, the question of how to allocate rewards for individual contributions remains central to both the technical soundness and ethical viability of such systems.

Cooperative game theory provides powerful abstractions to address this challenge \cite{chalkiadakis2011computational}. A value function assigns a utility to each subset (coalition) of participants, representing the model’s quality trained on their combined data. A reward function then determines how the total value should be distributed among contributors. The most well-known example is the Shapley value \cite{shapley1953value}, which allocates rewards based on expected marginal additive contributions across all coalition orderings. This axiomatic approach has been widely applied to fair allocation problems across economics, AI, and beyond.

However, classical cooperative game theory assumes that rewards are indivisible and non-replicable \cite{wang2020principled,wang2023data}. In collaborative learning, the resource to be allocated are trained model or data sets, which are infinitely replicable. Any participant can receive a model copy at no additional cost. This fundamental difference calls for a rethinking of fairness and incentives. Recent work by Sim and colleagues \cite{sim2020collaborative} addresses this gap by introducing the notion of incentive-aware model rewards, where the reward is not a monetary share but a customized model which is scaled in predictive quality according to one’s contribution. Their framework adapts classical fairness axioms to this replicable setting and proves that a scaled Shapley value reward mechanism satisfies desirable incentive and fairness properties.

While the additive scaled Shapley value offers a principled starting point, it has limitations in collaborative learning scenarios. In particular, additive marginal contributions may not reflect the contextual significance of a participant's data. For instance, improving a weak model’s accuracy from 10\% to 20\% may be more impactful than boosting an already strong model from 90\% to 92\%. The additive approach rewards the latter more, despite its lower relative improvement.

This paper proposes a novel reward mechanism that replaces additive gains with multiplicative (ratio-based) contributions, capturing each participant’s relative impact on model performance. We define a ratio-based scaled Shapley value, which measures the proportional improvement a participant brings to coalitions and satisfies the same set of incentive and fairness axioms as the additive scaled Shapley value by Sim and colleagues. We then embed this valuation into the $\rho$-scaled reward scheme introduced by Sim and colleagues, enabling a smooth trade-off between social welfare maximization and strict fairness.

We prove that our ratio-based reward mechanism retains critical properties such as non-negativity, feasibility, individual rationality (under $\rho$-scaling), symmetry, and strict fairness criteria. Moreover, we show that it can offer more intuitive and equitable outcomes in data-diverse or redundancy-prone settings, where relative value carries more meaning than absolute marginal gain. A simulated comparison with the additive Shapley reward demonstrates the distinct behavior and potential advantages of our approach.

Our contribution extends the existing incentive-aware reward framework with a principled, ratio-based alternative that is better aligned with proportional fairness and contextual contribution. This offers new flexibility and interpretability in designing collaborative machine learning systems that are both fair and strategically robust.


\section{Problem Formulation}\label{prob}
We consider a setting in which a group of participants collaboratively train a machine learning model by contributing their individual data. Each participant $i \in N$ possesses a private dataset and faces the decision of whether to contribute it to a joint coalition for model training. In return for contributing, participants receive access to a model, whose predictive quality improves with the amount, diversity, and complementarity of the data aggregated.

The reward in this context is non-monetary: Access to a higher-quality shared model. Since access is replicable and costless to distribute, the key design challenge is to allocate model quality as a reward in a way that is fair and incentive-compatible, ensuring that contributors benefit according to their data’s value while also encouraging cooperation.

We model this as a cooperative game in characteristic form, in line with the modeling by Sim and colleagues \cite{sim2020collaborative}. Let  $N$ be the set of all players and $v: 2^N \rightarrow \mathbb{R}_{\geq 0}$ a  value function such that $v(\emptyset) = 0$. Each coalition $C \subseteq N$  corresponds to a subset of participants who train a model using their combined data, and the function $v(C)$ (also written $v_C$) gives the utility or quality of that model (e.g., accuracy, F1 score).

We assume that the value function $v$ is monotonic, i.e., for all $C' \subseteq C \subseteq N$, $v(C') \leq v(C)$), reflecting that additional data cannot reduce model quality. Depending on the application, $v$ may also be submodular or exhibit diminishing returns, due to redundancy or saturation effects.

The goal is to define a reward allocation rule $(r_i)_{i \in N}$ that assigns to each player $i$ a share of the model’s value, interpreted as their individual model reward. It describes, how much model access or quality a participant receives.

This formulation is particularly relevant to domains such as federated learning, cross-institutional medical AI, legal document classification, and historical manuscript digitization, where: Data is distributed among independent agents, model training benefits from aggregation, and contributors care about access to an improved model, not monetary profit.


\section{Axioms for Incentivization and Fairness}\label{axioms}
An effective reward allocation mechanism in the described setting has to both incentivize players to participate and reward them fairly. To ensure this, Sim and colleagues have established a set of axioms that govern valid reward distributions \cite{sim2020collaborative}. Since we present an alternative reward scheme for the same setting, we will reiterate and briefly explain these axioms here. First, we give the incentive constraints that ensure feasibility, efficiency, and individual rationality. These constraints guarantee that the solution function remains viable and encourages participation. Building on this foundation, we state the fairness constraints to prevent unjust disparities in reward distribution. The following sections formalize these principles and their implications.

\subsection{Incentive Axioms}
To formulate a valid solution function, Sim and colleagues stated the following incentive constraints, ensuring that a solution function is feasible, efficient, and individually rational. These are common solution concepts from cooperative game theory as described in relevant textbooks \cite[Chapter 12]{shoham2008multiagent}, \cite{chalkiadakis2011computational}. For any $v \in \Gamma^N$ fulfilling the conditions stated in Section \ref{prob}, a valid solution function $\mathcal{M}(v)$ must fulfill the following axioms:\\
\\
\begin{tabularx}{\linewidth}{l X}
        R1 & \textbf{Non-negativity:} Each player must get a non-negative reward: $\forall i \in N$: $r_i \geq 0$.
\end{tabularx}
\begin{tabularx}{\linewidth}{l X}
        R2 &\textbf{Feasibility:} The reward for each player in any coalition $C \in CS$ cannot be larger than the value achieved by that coalition: $\forall C \in CS$, $\forall i \in C: r_i \leq v_C$.
\end{tabularx}
\begin{tabularx}{\linewidth}{l X}
        R3 & \textbf{Weak Efficiency:} In each coalition $C \in CS$, the reward received by at least one player $i \in C$ must be as large as the total value that coalition $C$ can achieve: $\forall C \in CS \exists i \in C: r_i = v_C$.
\end{tabularx}
\begin{tabularx}{\linewidth}{l X}
        R4 & \textbf{Individual Rationality:} Each player must receive a reward that is at least as large as the value that player can achieve by themselves: $\forall i \in N: r_i \geq v_i.$
\end{tabularx}\\
\\
R1 and R4 are the same axioms for solution concepts as in cooperative game theory with non-replicable rewards. R2 and R3 have been adapted from Chalkiadakis and colleagues \cite{chalkiadakis2011computational}, since in our setting we can give every member of a coalition $C\subseteq N$ a reward of up to $v_C$. We only require weak efficiency, since strong efficiency would imply having to pay out $v_C$ to every member of the coalition, which would maximize welfare, but impede any fairness considerations. Axiom R5 is necessary in our setting, since a solution function describes rewards for every player regardless of whether they are in the coalition. Thus, it has to be ensured that they are treated as if they work alone.

\subsection{Fairness Axioms}

To ensure a fair allocation of rewards, the solution function must meet the following four fairness axioms:\\

\noindent
\begin{tabularx}{\linewidth}{l X}
    F1 & \textbf{Uselessness:} If player $u$'s data does not provide an increase to the value of any coalition, then player $u$ should always receive a valueless reward. Furthermore, all other players in a coalition with $u$ should get identical rewards as they would if $u$ were not part of the coalition.  For all $i \in N$,
\end{tabularx}
\begin{align}
        \left( \forall C \subseteq N \setminus \{i\} \text{ with } C \neq \emptyset: v_C = v_{C \cup \{i\}} \right) \Rightarrow r_i = 0.
    \end{align}
\begin{tabularx}{\linewidth}{l X}
    F2 & \textbf{Symmetry:} If players $i$ and $j$ contribute identically to any coalition, they should receive equal rewards: For all $i,j \in N$ s.t. $i\neq j$,
\end{tabularx}
\begin{align}
        \left( \forall C \subseteq N \setminus \{i, j\}: v_{C \cup \{i\}} = v_{C \cup \{j\}} \right) \Rightarrow r_i = r_j.
    \end{align}
\begin{tabularx}{\linewidth}{l X}
    F3 & \textbf{Strict Desirability:} If the value of at least one coalition improves more by including player $i$ instead of player $j$, but the reverse is not true, player $i$ should receive a more valuable reward than $j$. For all $i,j \in N$ s.t. $i\neq j$, 
    {\begin{align}
        \begin{split}
            \left( \exists B \subseteq N \setminus \{i, j\}: v_{B \cup \{i\}} > v_{B \cup \{j\}} \right) & \land \\
            \left( \forall C \subseteq N \setminus \{i, j\} v_{C \cup \{i\}} \geq v_{C \cup \{j\}} \right) & \Rightarrow r_i > r_j.
        \end{split}
    \end{align}}
\end{tabularx}
\begin{tabularx}{\linewidth}{l X}
    F4 & \textbf{Strict Monotonicity:} Suppose the value of at least one coalition containing player $i$ improves (for example, by including more data of $i$), ceteris paribus, then player $i$ should receive a more valuable reward than before: Let $v$ and $v'$ denote any two value functions over all coalitions $C \subseteq N$ and $\mathcal{M}(v)^C_i$ and $\mathcal{M}(v')^C_i$ be the corresponding values of rewards received by player $i$ in coalition $C$. For all $C \in N$ and all $i \in C$,
\end{tabularx}
\begin{align}
    \begin{split}
        \left( \exists B \subseteq C \setminus \{i\}: v'_{B \cup \{i\}} > v_{B \cup \{i\}} \right) & \land\\
        \left(\forall C \subseteq N \setminus \{i\} v'_{C \cup \{i\}} \geq v_{C \cup \{i\}} \right) & \land\\
        \left( \forall A \subseteq N \setminus \{i\}: v'_A = v_A \right) \land \left( v'_N > r_i \right)  &\quad \Rightarrow   \quad r'_i > r_i.
    \end{split}
\end{align}

F1 and F2 are axioms of the Shapley value \cite{shapley1953value}. Axiom F3 was first introduced by Maschler and Peleg \cite{maschler1966characterization} and, in our setting, reduces to the fact that players who contribute larger values should receive larger rewards, compared to players with less valuable contributions. F4 is adapted from Young \cite{young1985monotonic}.

Together, these axioms were introduced by Sim and colleagues \cite{sim2020collaborative} as conditions for fairness. 
However, our axioms F3 and F4 are slightly weaker than theirs. In F3, we require $B \neq \emptyset$, that is, if two players $i$ and $j$ bring the same increase to every nonempty coalition, they may get the same reward in $C$, even if $v_i > v_j$. In F4, instead of $v'_C > v_C$, Sim and colleagues only required $\exists B \subseteq C \setminus \{i\}: v'_{B \cup \{i\}} > v_{B \cup \{i\}}$, i.e., if there is a set $B$ for which  $v'_{B \cup \{i\}} > v_{B \cup \{i\}}$, $i$ should get a higher reward not only in $B$ but also in coalitions containing $B$.

 For any game $v \in \Gamma^N$ that fulfills the conditions stated in Section \ref{prob}, for a function $\mathcal{M}(v)$ to be a valid solution, the following must hold:\\
\\
\begin{tabularx}{\linewidth}{l X}
    R5 & \textbf{Fairness:} For all $C \subseteq N$ and $i \in N$, the rewards $\mathcal{M}^C_i$ must satisfy F1 to F4.
\end{tabularx} \\


\section{A Ratio-Based Shapley Value}

The original incentive-aware reward scheme by Sim and colleagues \cite{sim2020collaborative} is grounded in the Shapley value, which measures marginal contribution through subtraction. The marginal contribution of a player $i \in N$ to a coalition $C \subseteq N \setminus \{i\}$ is measured as
\begin{align}
    \Delta^{abs}_i =  v_{ C \cup \{i\} } - v _C.
\end{align}

It is the amount of value that is added to the coalition by $i$ joining. Using this, the Shapley Value evaluates a participant’s contribution as the expected marginal increase in utility (e.g., model performance) when they join a coalition. To get the average marginal contribution of a player $i \in N$, the players marginal contribution to all possible coalitions is summed up and divided by the number of possible coalition:

\begin{align}
    Shapley_v(i) = \frac{1}{n!} \sum_{\pi \in \Pi_N} \left( v_{S_{\pi, i} \cup \{i\}} - v_{S_{\pi, i}} \right),
\end{align}

where $\Pi_N$ is the set of all possible permutations of $N$ and $S_{\Pi, i}$ is the coalition of parties preceding $i$ in permutation $\pi$.

While this framework satisfies many desirable incentive properties such as fairness, symmetry, and efficiency, it implicitly assumes that absolute gains are the most meaningful measure of contribution.

However, in many collaborative machine learning settings, relative improvement may provide a more accurate reflection of a participant’s true impact. Consider, for instance, a participant who improves the performance of a weak model from 10 \% to 20 \% accuracy versus one who improves a strong model from 90 \% to 92 \%. The additive Shapley value would reward the latter more, due to a higher absolute gain. Yet the former has doubled the model’s performance, which is a compelling and arguably more meaningful contribution.

This observation motivated the development of our ratio-based valuation that captures a participant’s multiplicative influence on the performance of coalitions. Rather than asking "how much does this player increase utility?" we ask "by what factor does this player improve the coalition’s performance?" This shift from additive to multiplicative reasoning leads to a reward scheme that better reflects contextual importance, novelty, and leverage, especially in settings with heterogeneous or redundant data sources.

Formally, for any player $i \in N$ and any coalition $C \subseteq N \setminus \{i\}$, we define the relative marginal contribution of $i$ to $C$ as:

\begin{align}
    \Delta^{rel}_{i, C} :=
    \begin{cases}
        \frac{v_{ C \cup \{i\} }}{ v _C} - 1 & \text{ if } v_C \neq 0 \\
        0 & \text{ else}.
    \end{cases}
\end{align}

Using this, we can define the ratio-based Shapley value $\phi^{rel}_i$ of player $i$ as the expected relative improvement they generate, averaged over all possible coalitions:

\begin{align}\label{ratioShapley}
    \phi^{rel}_i := \frac{1}{n!}\sum_{\pi \in \Pi_N} \Delta^{rel}_{i, S_{\pi, i}}.
\end{align}

This parallels the classical Shapley value, but evaluates ratios rather than differences.

However, the distribution of rewards according to the Shapley value would not fulfill R3 since no agent would get the maximum reward. Hence, we then apply the same $\rho$-scaling used by Sim and colleagues to ensure that the agent with the largest contribution gets the maximum reward:

\begin{align}\label{ratioShapleyReward}
    r_i = \left(\frac{\phi^{rel}_i}{\phi^{\ast}_C} \right)^{\rho} \times v_C
\end{align}
where $\phi^{\ast}_C = \max_{i \in C}\phi^{rel}_i$ ensures normalization and $\rho \in [0, 1]$ controls the size of rewards, acting as a mediator between fairness and social welfare maximization.

This ratio-based formulation intuitively rewards participants who make coalitions better in proportion to their prior quality. This is especially relevant in collaborative settings with:

\begin{itemize}
    \item Heterogeneous data quality, where a small amount of high-quality data can drastically improve a weak model;
    \item Redundant contributions, where marginal additive gains become small due to overlapping information.
    \item Early-stage model building, where absolute gains are small but relative improvements are large.
\end{itemize}
Moreover, we show in the next sections that this formulation preserves key incentive properties: feasibility, non-negativity, individual rationality (with $\rho$-scaling), symmetry, and fairness axioms adapted for the ratio-based setting. This makes it a viable alternative within the same cooperative game framework, offering an alternative interpretation of "contribution" without sacrificing desirable theoretical guarantees.

\subsection{Ratio-based Scaled Shapley Value Fulfills Incentive and Fairness Axioms}

To be a viable alternative within cooperative machine learning, the ratio-based Shapley value must preserve the essential incentive properties that make the additive Shapley value attractive. In this subsection, we show that our formulation satisfies the set of rationality and fairness conditions adapted from the incentive-aware model reward framework introduced by Sim and colleagues \cite{sim2020collaborative} sated in section \ref{axioms}. Specifically, the following theorem summarizes the key properties satisfied by the ratio-based value defined above. We do not treat R4 here, since it is not necessarily fulfilled by any $\rho \in [0, 1]$ and therefore will be treated separately later on.  

\begin{theorem}
    The ratio-based Shapley value described in equation \ref{ratioShapleyReward} fulfills the  Axioms R1 to R3 as well as F1 to F4.
\end{theorem}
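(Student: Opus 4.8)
The plan is to verify the seven axioms in the order R1, R2, R3, then F1 through F4, treating each as a direct consequence of the structure of the reward formula in equation \ref{ratioShapleyReward}. Throughout, I fix a coalition $C$ and write $r_i = (\phi^{rel}_i / \phi^{\ast}_C)^{\rho} \, v_C$, where $\phi^{\ast}_C = \max_{j \in C} \phi^{rel}_j$. The crucial preliminary observation, which underwrites almost everything else, is that each relative marginal contribution $\Delta^{rel}_{i,S}$ is nonnegative: since $v$ is monotonic and $v(\emptyset)=0$, we have $v_{S \cup \{i\}} \geq v_S \geq 0$, so whenever $v_S \neq 0$ the ratio $v_{S\cup\{i\}}/v_S \geq 1$ and hence $\Delta^{rel}_{i,S} \geq 0$; in the degenerate case $v_S = 0$ the contribution is defined to be $0$. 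Averaging over permutations preserves this, so $\phi^{rel}_i \geq 0$ for every $i$.

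From this, R1 is immediate: the ratio $\phi^{rel}_i / \phi^{\ast}_C$ is a nonnegative real raised to the power $\rho \geq 0$ and multiplied by $v_C \geq 0$, so $r_i \geq 0$. For R2, I would argue that $\phi^{rel}_i \leq \phi^{\ast}_C$ by definition of the maximum, so the base of the exponent lies in $[0,1]$; since $\rho \in [0,1]$, raising a number in $[0,1]$ to such a power keeps it in $[0,1]$, giving $(\phi^{rel}_i/\phi^{\ast}_C)^{\rho} \leq 1$ and therefore $r_i \leq v_C$. For R3, I would take $i^{\ast} \in \arg\max_{j \in C} \phi^{rel}_j$; then $\phi^{rel}_{i^{\ast}} = \phi^{\ast}_C$, the base equals $1$, and $r_{i^{\ast}} = 1^{\rho} \cdot v_C = v_C$, so at least one player attains the full coalition value. (I would note the edge case where every $\phi^{rel}_j = 0$, i.e. $\phi^{\ast}_C = 0$, and handle the resulting $0/0$ by the convention that all rewards then collapse to the full value or are defined directly, matching the scaling used by Sim and colleagues.)

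For the fairness axioms, the strategy is to reduce each statement about rewards to the corresponding statement about the unscaled values $\phi^{rel}_i$, since the map $x \mapsto (x/\phi^{\ast}_C)^{\rho} v_C$ is (weakly) monotone increasing in $x$ and depends on $i$ only through $\phi^{rel}_i$. Concretely: F1 (uselessness) follows because if $i$ adds nothing to any nonempty coalition, then $\Delta^{rel}_{i,S} = 0$ for every $S$ appearing in the permutation sum, forcing $\phi^{rel}_i = 0$ and hence $r_i = 0$. F2 (symmetry) follows by showing the hypothesis $v_{C'\cup\{i\}} = v_{C'\cup\{j\}}$ for all $C' \subseteq N \setminus \{i,j\}$ implies $\phi^{rel}_i = \phi^{rel}_j$ via a pairing of permutations that swaps the positions of $i$ and $j$; because the two players face identical preceding-coalition values, their averaged relative contributions coincide, and equal $\phi$-values yield equal rewards. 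F3 (strict desirability) and F4 (strict monotonicity) require the monotonicity of $\phi^{rel}$ in the value data: I would show that the stated dominance hypotheses (weak dominance on all coalitions plus strict dominance on at least one nonempty coalition) push $\phi^{rel}_i$ strictly above $\phi^{rel}_j$ (resp. $\phi^{rel}_i$ strictly above its old value), and then observe that the scaling map is strictly increasing in the numerator so long as $\phi^{\ast}_C > 0$, transferring strictness to the rewards.

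The main obstacle I anticipate is F3 and F4, and specifically the strictness. The subtraction-based Shapley value makes marginal dominance transparent, but in the ratio formulation the term $\Delta^{rel}_{i,S} = v_{S\cup\{i\}}/v_S - 1$ compares increments relative to the \emph{same} base $v_S$, so I must check carefully that $v_{S\cup\{i\}} \geq v_{S\cup\{j\}}$ (with strict inequality somewhere) indeed yields $\Delta^{rel}_{i,S} \geq \Delta^{rel}_{j,S}$ termwise — which it does, since dividing both sides of $v_{S\cup\{i\}} \geq v_{S\cup\{j\}}$ by the common $v_S > 0$ preserves the inequality — and that the strict gain on at least one nonempty coalition $B$ survives averaging rather than being washed out. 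The restriction to $B \neq \emptyset$ noted in the axiom discussion is exactly what guarantees the strict term lies in the nondegenerate regime $v_B \neq 0$, so I expect the proof of strictness to hinge on isolating that single permutation block and arguing all other blocks contribute weakly in the same direction. A secondary subtlety is the normalization by $\phi^{\ast}_C$: I must confirm that changing the underlying value function in F4 does not decrease $\phi^{\ast}_C$ in a way that could perversely reduce $r_i$, so I would verify that under the F4 hypotheses the maximizer's value is unaffected or that the ratio still increases, which is where I will spend the most care.
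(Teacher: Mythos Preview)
Your treatment of R1--R3, F1, and F2 matches the paper's argument essentially line for line, including the permutation-swapping bijection for symmetry. The gaps are in F3 and F4.

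For F3, your termwise comparison ``divide $v_{S\cup\{i\}} \geq v_{S\cup\{j\}}$ by the common $v_S$'' only handles the permutations in which neither $i$ nor $j$ appears in the preceding set. But in the Shapley sum for $i$, half of the preceding sets $S_{\pi,i}$ contain $j$, and these must be paired with terms in $j$'s sum whose preceding set contains $i$. In that case the \emph{numerators} coincide (both equal $v_{S\cup\{i,j\}}$) while the \emph{denominators} differ: $v_{S\cup\{j\}}$ versus $v_{S\cup\{i\}}$. The inequality still goes the right way because $v_{S\cup\{j\}} \leq v_{S\cup\{i\}}$ makes $i$'s ratio larger, but this is a second case you have not accounted for. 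The paper handles F3 by exactly this two-case permutation-swap, mirroring the F2 argument.

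For F4, you have correctly located the difficulty---the normalizer $\phi^{\ast}_C$ can move when $v$ changes---but you are missing the key lemma that resolves it. The paper proves a \emph{party monotonicity} property: under the F4 hypotheses, the increment $\phi^{rel\prime}_i - \phi^{rel}_i$ is at least as large as $\phi^{rel\prime}_j - \phi^{rel}_j$ for every other player $j$. This is not obvious, because other players' ratio-contributions can also change (whenever their preceding set contains $i$, both numerator and denominator of their ratio move). The paper establishes the lemma by restricting $j$'s changed terms to permutations with $i \prec j$, bounding the increment, and re-indexing via the $i\leftrightarrow j$ swap to embed it inside $i$'s full increment. Once party monotonicity is in hand, taking $l = \arg\max_j \phi^{rel\prime}_j$ gives both $\phi^{rel\prime}_i \leq \phi^{rel\prime}_l$ and $\phi^{rel\prime}_i - \phi^{rel}_i \geq \phi^{rel\prime}_l - \phi^{rel}_l$, from which a short algebraic manipulation yields $\phi^{rel\prime}_i/\phi^{rel\prime}_l \geq \phi^{rel}_i/\phi^{rel}_l \geq \phi^{rel}_i/\phi^{rel}_p$ (with $p$ the old maximizer), and hence $r'_i \geq r_i$; the premise $v'_N > r_i$ then rules out equality. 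Without this lemma your plan for F4 does not close.
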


\begin{proof}
    Let $v \in \Gamma^N$ be monotonic and fulfill $v(\emptyset = 0)$ for all nonempty $C \subseteq N$. We will see that it fulfills Axioms R1 to R3 as well as F1 to F4.\\
    \\
    
    R1 \textbf{Non-negativity:}\\
    By definition, we have $v(C) \geq 0$ for all $C \subseteq N$.
    Furthermore, due to the monotonicity constraint, for all $C \subset N$ and all $i \notin C$, we have $v(C \cup \{i\}) \geq v(C)$, therefore $\frac{v(C \cup \{i\})}{v(C)} \geq 1$. Thus, for all $\pi \in \Pi_N$ with $v \left(S_{\pi, i} \right) > 0$,
    \begin{align}
        \frac{v\left( S_{\pi, i}\cup \{i\} \right)}{v \left(S_{\pi, i} \right)} - 1 \geq 0
    \end{align}
    has to hold. If $v \left(S_{\pi, i} \right) > 0$, by definition we have $\Delta^{rel}_{i, C} = 0$. Therefore, for all $i \in N$, $ \phi^{rel}_i \geq 0$ and thus $r_i \geq 0$.\\
    \\
    
    R2 \textbf{Feasibility:}\\
    By definition, for all $i \in N$ and all $\rho \in [0, 1]$ we have $\frac{\phi^{rel}_i}{\phi^{\ast}_C} \leq 1$ and therefore $r_i = \left(\frac{\phi^{rel}_i}{\phi^{\ast}_C} \right)^{\rho} \times v_C \leq 1^{\rho} \times v_C = v_C$.\\
    \\
    
    R3 \textbf{Weak Efficiency:}\\
    Let $i = \arg\max_{j \in C}\phi^{rel}_i$, so $\phi^{rel}_i = \phi^{\ast}_C$. Then by definition, we have $\frac{\phi^{rel}_i}{\phi^{\ast}_C} = 1$ and therefore $r_i = \left(\frac{\phi^{rel}_i}{\phi^{\ast}_C} \right)^{\rho} \times v_C = v_C$.\\
    \\
    F1 \textbf{Uselessness:}\\
    Suppose for some $i \in N$: $\forall C \subseteq N \setminus \{i\}$ with $C \neq \emptyset$: $v_{C \cup \{i\}} = v_C$, then
    \begin{align}
        \begin{split}
            \phi^{rel}_i & = \frac{1}{n!} \sum_{\pi \in \Pi_N}\Delta^{rel}_{i, S_{\pi, i}}\\
            & = \frac{1}{n!}\left( \sum_{\pi \in \Pi^+_N} \left( \frac{v\left( S_{\pi, i}\cup \{i\} \right)}{v \left(S_{\pi, i} \right)} - 1 \right) + \sum_{\pi \in \Pi^0_N} 0 \right)\\
            & = \frac{1}{n!} \sum_{\pi \in \Pi^0_N} \left( 1 - 1 \right) + \sum_{\pi \in \Pi^0_N} 0\\
            & = 0,
        \end{split}
    \end{align}
    where $\Pi^+_N$ denotes the subset of $\Pi_N$, such that for all $\pi \in \Pi^+_N$ $v_{S_{\pi, i}} > 0$ and $\Pi^0_N := \Pi_N \setminus \Pi^+_N$. Thus $r_i = \left(\frac{\phi^{rel}_i}{\phi^{\ast}_C} \right)^{\rho} \times v_C = \left(\frac{0}{\phi^{\ast}_C} \right)^{\rho} \times v_C = 0$.\\
    \\
    
    F2 \textbf{Symmetry:}\\
    Let $i, j \in N$ with $i \neq j$ and $\forall C \subseteq N \setminus \{i, j\}$ $v_{C \cup \{i\}} = v_{C \cup \{j\}}$. Let $\Pi^+_N$ and $\Pi^0_N$ be defined as above. For each $\pi \in \Pi_N$ we can obtain a corresponding $\pi'\in \Pi_N$ by swapping the position of $i$ and $j$. Consider the following two cases: (a) When $i$ precedes $j$ in $\pi$, $S_{\pi, i} = S_{\pi', j}$ and thus $v_{S_{\pi, i}} = v_{S_{\pi', j}}$. By setting $C = S_{\pi, i} = S_{\pi', j}$, we have $v_{S_{\pi, i} \cup \{i\}} = v_{C \cup \{i\}} = v_{C \cup \{j\}} = v_{S_{\pi', j} \cup \{j\}}$; (b) when $j$ precedes $i$ in $\pi$, $S_{\pi, i} \cup \{i\} = S_{\pi', j} \cup \{j\}$ and thus $v_{S_{\pi, i}\cup \{i\}} = v_{S_{\pi', j}\cup \{j\}}$. By setting $C = S_{\pi, i} \setminus \{j\} = S_{\pi', j} \setminus \{i\}$, we have $v_{S_{\pi, i} } = v_{C \cup \{j\}} = v_{C \cup \{i\}} = v_{S_{\pi', j}}$. Thus especially, $v \left(S_{\pi, i} \right) = 0$, iff $v \left(S_{\pi', j} \right) = 0$ Therefore:
    \begin{align}
        \begin{split}
            & \sum_{\pi \in \Pi_N} \Delta^{rel}_{i, S_{\pi, i}} = \sum_{\pi \in \Pi^+_N}\frac{v\left( S_{\pi, i}\cup \{i\} \right)}{v \left(S_{\pi, i} \right)} - 1\\
            = & \sum_{\pi' \in\Pi^+_N}\frac{v\left( S_{\pi', j}\cup \{j\} \right)}{v \left(S_{\pi', j} \right)} - 1 = \sum_{\pi' \in \Pi_N} \Delta^{rel}_{i, S_{\pi', i}}.
        \end{split}
    \end{align}
    Thus
    \begin{align}
        \phi_i^{rel} = \frac{1}{n!}\sum_{\pi \in \Pi_N} \Delta^{rel}_{i, S_{\pi, i}} = \frac{1}{n!}\sum_{\pi' \in \Pi_N} \Delta^{rel}_{j, S_{\pi, j}} = \phi_j^{rel}.
    \end{align}
    and therefore
    \begin{align}
        \begin{split}
            r_i & = \left(\frac{\phi^{rel}_i}{\phi^{\ast}_C} \right)^{\rho} \times v_C = \left(\frac{\phi^{rel}_j}{\phi^{\ast}_C} \right)^{\rho} \times v_C  = r_j.
        \end{split}
    \end{align}\\
    \\
    
    F3 \textbf{Strict Desirability:}\\
    Assume there are $i, j \in N$ with $i \neq j$ and $\forall C \subseteq N \setminus \{i, j\}$ $v_{C \cup \{i\}} \geq v_{C \cup \{j\}}$. Furthermore let there be some  $B \subseteq N \setminus \{i, j\}$ such that $v_{B \cup \{i\}} > v_{B \cup \{j\}}$. Let $\Pi^+_N$ and $\Pi^0_N$ be defined as above. For each $\pi \in \Pi_N$ we can obtain a corresponding $\pi'\in \Pi_N$ by swapping the position of $i$ and $j$. Similar to to the proof of F2, we show the inequality $r^C_i(v) > r_j^C(v)$ by considering the following two cases: (a) when $i$ precedes $j$ in $\pi$, $S_{\pi, i}= S_{\pi', j}$ (i.e., both excluding $i$ and $j$) and thus $v_{S_{\pi, i}} = v_{S_{\pi', j}}$. It also follows from the assumption that by setting $C = S_{\pi, i} = S_{\pi, i}$, $v_{S_{\pi, i}} = v_{C \cup \{i\}} \geq v_{C \cup \{j\}} = v_{S_{\pi', j}}$; (b) when $j$ precedes $i$ in $\pi$, $S_{\pi, i} \cup \{ i \} =S_{\pi', j} \cup \{ j \}$ (i.e., both containing $i$ and $j$) and thus $v_{S_{\pi ,i} \cup \{i\}} = v_{S_{\pi' ,j} \cup \{j\}}$. It also follows from our assumption that by setting $C = S_{\pi ,i} \setminus \{j\} = S_{\pi' ,j} \setminus \{i\}$, $v_{S_{\pi, i}} = v_{C \cup \{j\}} \leq v_{C \cup \{i\}} = v_{S_{\pi', j}}$. Due to our assumption, a strict inequality must hold for at least one $\pi, \pi'$ in either case a or b. Therefore:
        \begin{align}
        \begin{split}
            & \sum_{\pi \in \Pi_N} \Delta^{rel}_{i, S_{\pi, i}} = \sum_{\pi \in \Pi^+_N}\frac{v\left( S_{\pi, i}\cup \{i\} \right)}{v \left(S_{\pi, i} \right)} - 1\\
            > & \sum_{\pi' \in\Pi^+_N}\frac{v\left( S_{\pi', j}\cup \{j\} \right)}{v \left(S_{\pi', j} \right)} - 1 = \sum_{\pi' \in \Pi_N} \Delta^{rel}_{i, S_{\pi', i}}.
        \end{split}
    \end{align}
    Thus
    \begin{align}
        \phi_i^{rel} = \frac{1}{n!}\sum_{\pi \in \Pi_N} \Delta^{rel}_{i, S_{\pi, i}} > \frac{1}{n!}\sum_{\pi' \in \Pi_N} \Delta^{rel}_{j, S_{\pi, j}} = \phi_j^{rel}.
    \end{align}
    and finally
    \begin{align}
        \begin{split}
            r_i & = \left(\frac{\phi^{rel}_i}{\phi^{\ast}_C} \right)^{\rho} \times v_C > \left(\frac{\phi^{rel}_j}{\phi^{\ast}_C} \right)^{\rho} \times v_C  = r_j.
        \end{split}
    \end{align}\\
    \\
    
     F4 \textbf{Strict Monotonicity:}\\
     Let $\{v_C\}_{C \in 2^N}$ and $\{v'_C\}_{C \in 2^N}$ denote any two sets of values of data over all coalitions $C \subseteq N$, and $r_i$ and $r'_i$ be the corresponding values of model rewards received by party $i$. Let $\Pi^+_N$ and $\Pi^0_N$ be defined as above. For some $i \in N$, let
    \begin{align}\label{condF4}
        \begin{split}
            & \left( \exists B \subseteq C \setminus \{i\}: v'_{B \cup \{i\}} > v_{B \cup \{i\}} \right) \land
             \left(\forall C \subseteq N \setminus \{i\} v'_{C \cup \{i\}} \geq v_{C \cup \{i\}} \right) \land\\
             &\left( \forall A \subseteq N \setminus \{i\}: v'_A = v_A \right) \land \left( v'_N > r_i \right),
        \end{split}
    \end{align}
    then
    \begin{align}
        \begin{split}
            \phi^{rel}_i & = \sum_{\pi \in \Pi_N} \Delta^{rel}_{i, S_{\pi, i}} = \sum_{\pi \in \Pi^+_N}\frac{v\left( S_{\pi, i}\cup \{i\} \right)}{v \left(S_{\pi, i} \right)} - 1\\
            & < \sum_{\pi \in \Pi^+_N}\frac{v'\left( S_{\pi, i}\cup \{i\} \right)}{v' \left(S_{\pi, i} \right)} - 1 = \sum_{\pi \in \Pi_N} \Delta^{rel'}_{i, S_{\pi, i}} = \phi^{rel'}_i.
        \end{split}
    \end{align}
    Note that for any $\pi$, $S_{\pi, i}$ does not contain $i$. Hence, it follows that $v_{S_{\pi, i}} = v'_{S_{\pi, i}}$. However, there must be at least one $\pi$, such that $v_{S_{\pi, i} \cup \{i\}} < v'_{S_{\pi, i} \cup \{i\}}$.\\

    Now, we will show that the Geometric Shapley Value fulfills the \textbf{party monotonicity} property which will be used in the following. If $i$ fulfills \ref{condF4}, then for all $j \in N \setminus \{i\}$, $\phi'_i - \phi_i \geq \phi'j - \phi_j$:
    Let $v_C^{\Delta} := \left( v'_C - v_C \right)$ for any $C \subseteq N$. Then, for any $j \in N \setminus \{i\}$:
    \begin{align}
        \phi^{rel'}_j - \phi^{rel}_j & = \frac{1}{n!}\left( \sum_{\pi \in \Pi^+_{i \prec j}}\frac{v'_{S_{\pi, i}\cup \{j\}}}{v'_{S_{\pi, i}}} - \frac{v_{S_{\pi, i}\cup \{j\}}}{v_{S_{\pi, i}}} \right)\\
        & \leq \frac{1}{n!}\left( \sum_{\pi \in \Pi^+_{i \prec j}}\frac{v'_{S_{\pi, i}\cup \{j\}}}{v_{S_{\pi, i}}} - \frac{v_{S_{\pi, i}\cup \{j\}}}{v_{S_{\pi, i}}} \right)\\
        & = \frac{1}{n!}\left( \sum_{\pi \in \Pi_{i \prec j}}\frac{v^{\Delta}_{S_{\pi, i}\cup \{j\}}}{v_{S_{\pi, i}}} \right)\\
        & = \frac{1}{n!}\left( \sum_{\pi' \in \Pi_{j \prec i}}\frac{v^{\Delta}_{S_{\pi, i}\cup \{i\}}}{v_{S_{\pi, i}}} \right)\\
        & \leq \frac{1}{n!}\left( \sum_{\pi \in \Pi_N}\frac{v^{\Delta}_{S_{\pi, i}\cup \{j\}}}{v_{S_{\pi, i}}} \right) = \phi^{rel'}_i - \phi^{rel}_i
    \end{align}

    where $\Pi_{i \prec j}$ is a subset of $\Pi_N$ containing all permutations with $i$ preceding $j$. We consider only the subset $\Pi_{i \prec j}$ of $\Pi_N$ in the first equality since the quality of a model trained on the aggregated data can only change when containing party i. The first inequality is due to $v'_C \geq v_C$ for all $C \subseteq N$. The second equality can be understood by swapping the positions of $i$ and $j$ in permutation $\pi$ to obtain $\pi'$ such that $S_{\pi, j} \cup \{j\} = S_{\pi', i} \cup \{i\}$. The second inequality holds because there might be a permutation $\pi$ where $i \prec j$ in $\pi$ (i.e. $j \notin S_{\pi, i}$) and $v^{\Delta}_{S_{\pi, i} \cup \{i\}} \geq 0$. The last equality holds since $v^{\Delta}_{S_{\pi, i}} = v_{S_{\pi, i}}$ because $i\notin S_{\pi, i}$.

    Using this, we can finally proof F4: Let $l \in N$ be the player with the largest Shapley value $\phi'_l$ based on $v'$. Thus we have $\phi'_i \leq \phi'_l$ and $\phi'_i - \phi_i \geq \phi'_l - \phi_l$ due to party monotonicity when $l \neq i$. Together, it follows that $\phi_l \geq \phi_i$. Then
    \begin{align}
        \begin{split}
            \phi^{rel'}_i - \phi^{rel}_i & \geq \phi^{rel'}_l - \phi^{rel'}_l\\
            \phi^{rel}_l \left( \phi^{rel'}_i - \phi^{rel}_i \right) & \geq \phi^{rel}_i \left( \phi^{rel'}_l - \phi^{rel}_l \right)\\
            \phi^{rel}_l \times \phi^{rel'}_i & \geq \phi^{rel}_i \times \phi^{rel'}_l\\
            \frac{\phi^{rel'}_i}{\phi^{rel'}_l} & \geq \frac{\phi^{rel}_i}{\phi^{rel}_l}
        \end{split}
    \end{align}
    Let $p \in N$ be the party with the largest Shapley value $\phi_p$ based on $v$. Then
    \begin{align}
            r'_i = \left( \frac{\phi^{rel'}_i}{\phi^{rel'}_l} \right)^{\rho} \times v'_N
            \geq \left( \frac{\phi^{rel'}_i}{\phi^{rel'}_l} \right)^{\rho} \times v_N
            \geq \left( \frac{\phi^{rel}_i}{\phi^{rel}_l} \right)^{\rho} \times v_N
            \geq \left( \frac{\phi^{rel}_i}{\phi^{rel}_p} \right)^{\rho} \times v_N
            = r_i
    \end{align}
    For equality $r'_i \geq r_i$ to possibly hold, it has to be the case that $\phi^{rel}_i = \phi^{rel}_l = \phi^{rel}_p$ and $v'_N = v_N$. This implies $r'_i = v'_N = v_N = r_i$, so the premise $v'_N > r_i$ of F4 is not satisfied.
\end{proof}

\subsection{Proof of Individual Rationality (R4) and Stability}

Since our reward scheme follows the same structure as the additive Shapley reward scheme by Sim and colleagues, it suffers the same limitations regarding individual rationality and group stability. However, these can be overcome in the exact same way as done by Sim and colleagues. To achieve individual rationality, we have to ensure that 

\begin{align}
    \rho \leq \rho_r = \min_{i \in N}\frac{\log \left( \frac{v_i}{v_N} \right)}{\log \left( \frac{\phi^{rel}_i}{\phi^*} \right)}.
\end{align}

The stability of the grand coalition is defined by Sim and collegues as follows:\\
\\
\begin{tabularx}{\linewidth}{l X}
        R6 & \textbf{Stability of the Grand Coalition:} The grand coalition $N$ is stable if for every coalition $C \subset N$, the value of the model reward received by the party with the largest Shapely value is at least $v_C$:
        \begin{align}
            \forall C \subseteq N,  \forall i \in C: \phi_i = \max_{j \in C}\phi_i \Rightarrow v_C \geq r_i
        \end{align}
\end{tabularx}

Just as with the additive scaled Shapley value, this can be achieved by setting

\begin{align}
    \rho \leq \rho_s = \min_{j \in N}\frac{\log \left( \frac{v_{C_j}}{v_N} \right)}{\log \left( \frac{\phi^{rel}_j}{\phi^*} \right)}
\end{align}

Where for any $i \in N$, $C_i := \{ j | \phi_j \leq \phi_i  \}$, that is $C_i$ includes all agents whose Shapley value is at most $\phi_i$.

Together this gives us the following two theorems:

\begin{theorem}\label{theoR4}
    For all $\rho \leq \rho_r = \min_{i \in N}\frac{\log \left( \frac{v_i}{v_N} \right)}{\log \left( \frac{\phi^{rel}_i}{\phi^*} \right)}$, the rewards $\left( r_i \right)_{i \in N}$ satisfy individual rationality (R4).
\end{theorem}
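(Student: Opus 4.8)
The plan is to reduce the individual-rationality requirement $r_i \geq v_i$ to an explicit per-player upper bound on $\rho$, and then to intersect these bounds by taking a minimum. First I would fix a player $i \in N$ and write out the reward received in the grand coalition, $r_i = \left(\frac{\phi^{rel}_i}{\phi^*}\right)^{\rho} v_N$, so that R4 for $i$ is equivalent to $\left(\frac{\phi^{rel}_i}{\phi^*}\right)^{\rho} v_N \geq v_i$, i.e. to $\left(\frac{\phi^{rel}_i}{\phi^*}\right)^{\rho} \geq \frac{v_i}{v_N}$.

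Next I would take logarithms on both sides. The key observation is that both arguments are at most one: $\frac{\phi^{rel}_i}{\phi^*} \leq 1$ by the definition of $\phi^*$ as the maximum, and $\frac{v_i}{v_N} \leq 1$ by monotonicity of $v$. Hence both logarithms are nonpositive, and taking logs yields $\rho \log\left(\frac{\phi^{rel}_i}{\phi^*}\right) \geq \log\left(\frac{v_i}{v_N}\right)$. Dividing through by the strictly negative quantity $\log\left(\frac{\phi^{rel}_i}{\phi^*}\right)$ flips the inequality and produces exactly the per-player bound $\rho \leq \frac{\log(v_i/v_N)}{\log(\phi^{rel}_i/\phi^*)}$. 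Since this chain of equivalences is reversible, that bound is equivalent to R4 for player $i$.

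Finally, because R4 must hold simultaneously for all $i \in N$, I would take the minimum of the per-player bounds, obtaining $\rho \leq \rho_r = \min_{i \in N}\frac{\log(v_i/v_N)}{\log(\phi^{rel}_i/\phi^*)}$; conversely, any $\rho \leq \rho_r$ satisfies every per-player inequality and hence guarantees R4. I would also note in passing that $\rho_r \geq 0$, as it is a ratio of two nonpositive numbers, so the constraint is consistent with $\rho \in [0,1]$.

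The main obstacle is the careful treatment of the degenerate cases that render the displayed fraction ill-defined. When $\phi^{rel}_i = \phi^*$ (the maximal contributor) the denominator vanishes; here I would argue separately that $r_i = v_N \geq v_i$ holds for every $\rho$ by monotonicity, so this player imposes no binding constraint. Likewise, when $v_i = 0$ the numerator tends to $-\infty$, but R4 then reduces to $r_i \geq 0$, which is already secured by non-negativity (R1); such players again impose no constraint and may be dropped from the minimum. Once these boundary cases are isolated, the remainder is the routine, sign-sensitive logarithmic manipulation described above.
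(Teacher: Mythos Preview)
Your proposal is correct and follows exactly the approach the paper has in mind: the paper does not spell out a proof but simply states that it is identical to the one given by Sim and colleagues for the additive scaled Shapley value, which proceeds by the same logarithmic manipulation and per-player bound you describe. Your explicit handling of the degenerate cases ($\phi^{rel}_i = \phi^*$ and $v_i = 0$) in fact goes slightly beyond what the paper records.
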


\begin{theorem}\label{theoStab}
    For all $\rho \leq \rho_s = \min_{j \in N}\frac{\log \left( \frac{v_{C_j}}{v_N} \right)}{\log \left( \frac{\phi^{rel}_j}{\phi^*} \right)}$, the rewards $\left( r_i \right)_{i \in N}$ satisfy individual rationality (R4).
\end{theorem}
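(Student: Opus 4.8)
The plan is to prove the statement by reduction to Theorem~\ref{theoR4}, which already establishes individual rationality (R4) for every $\rho \le \rho_r$. Since both theorems assert the very same property, it suffices to show that the threshold $\rho_s$ never exceeds $\rho_r$; then any $\rho \le \rho_s$ automatically satisfies $\rho \le \rho_r$, and the conclusion follows verbatim from the earlier theorem. In other words, $\rho_s$ is simply a more restrictive cap on $\rho$ than $\rho_r$, so the guarantee is inherited at no extra cost.

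The first substantive step is a term-by-term comparison of the two thresholds. Fix any $j \in N$. By definition $C_j = \{k \mid \phi_k \le \phi_j\}$, and since $\phi_j \le \phi_j$ trivially, we have $j \in C_j$. Monotonicity of $v$ (assumed in Section~\ref{prob}) then yields $v_{C_j} \ge v_{\{j\}} = v_j$, and hence $\log\!\left(\tfrac{v_{C_j}}{v_N}\right) \ge \log\!\left(\tfrac{v_j}{v_N}\right)$. Both of these logarithms are non-positive, because monotonicity also gives $v_{C_j} \le v_N$ and $v_j \le v_N$.

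The delicate step is the division by $\log\!\left(\tfrac{\phi^{rel}_j}{\phi^*}\right)$. Since $\phi^{rel}_j \le \phi^*$ by the definition of $\phi^*$ as the maximum, this denominator is strictly negative (for the maximiser it is zero, in which case the ratio is conventionally $+\infty$ and is irrelevant to the minimum, and for which $r_j = v_N \ge v_j$ holds trivially anyway). Dividing the inequality $\log\!\left(\tfrac{v_{C_j}}{v_N}\right) \ge \log\!\left(\tfrac{v_j}{v_N}\right)$ by this negative quantity reverses its direction, giving
\begin{align}
    \frac{\log\!\left(\tfrac{v_{C_j}}{v_N}\right)}{\log\!\left(\tfrac{\phi^{rel}_j}{\phi^*}\right)} \le \frac{\log\!\left(\tfrac{v_j}{v_N}\right)}{\log\!\left(\tfrac{\phi^{rel}_j}{\phi^*}\right)}.
\end{align}
Taking the minimum over all $j \in N$ on both sides then yields $\rho_s \le \rho_r$.

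Finally I would combine the pieces: given $\rho \le \rho_s$, the bound $\rho_s \le \rho_r$ forces $\rho \le \rho_r$, whereupon Theorem~\ref{theoR4} guarantees that the rewards $(r_i)_{i \in N}$ satisfy R4. The main obstacle I anticipate is purely careful sign bookkeeping, namely ensuring that the inequality is reversed exactly once when dividing by the negative logarithm, together with the boundary case of the Shapley-maximal player, for whom the defining ratio degenerates but for whom individual rationality holds trivially by monotonicity and who therefore imposes no binding constraint on either threshold.
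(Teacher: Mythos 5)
Your argument is internally sound as a proof of the statement \emph{as literally printed}: since $j \in C_j$, monotonicity (Section~\ref{prob}) gives $v_j \le v_{C_j} \le v_N$, and dividing the resulting inequality of non-positive logarithms termwise by the negative quantity $\log\left(\phi^{rel}_j/\phi^*\right)$ does yield $\rho_s \le \rho_r$, so $\rho \le \rho_s$ implies R4 via Theorem~\ref{theoR4}. The side lemma $\rho_s \le \rho_r$ is correct and even worth recording. One minor slip: for the Shapley-maximal player the $\rho_s$-term degenerates to $0/0$ (since then $C_j = N$ and the numerator also vanishes), not to a nonzero-over-zero ratio as you state; but such terms are excluded from the minimum either way, so this does not affect your comparison.

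The real problem is that the printed statement is an evident copy-paste error, and your reduction proves the wrong theorem. The label, the immediately preceding definition of R6 and of $\rho_s$ via the sets $C_j$, and the paper's closing remark that the proofs of Theorems~\ref{theoR4} and~\ref{theoStab} are ``identical to the proofs for individual rationality \emph{and stability}'' in \cite{sim2020collaborative} all show that Theorem~\ref{theoStab} is meant to assert stability of the grand coalition (R6); as a second R4 statement it would merely restate Theorem~\ref{theoR4} with a smaller cap. The intended proof, which the paper imports from Sim and colleagues, is genuinely different and is not reachable from your reduction: for any coalition $C$ and the player $j$ with the largest Shapley value in $C$, one has $C \subseteq C_j$ and hence $v_C \le v_{C_j}$ by monotonicity; the definition of $\rho_s$ gives $\rho_s \log\left(\phi^{rel}_j/\phi^*\right) \ge \log\left(v_{C_j}/v_N\right)$, i.e.\ $\left(\phi^{rel}_j/\phi^*\right)^{\rho_s} \ge v_{C_j}/v_N$, and since the base is at most $1$ and $\rho \le \rho_s$, it follows that $r_j = \left(\phi^{rel}_j/\phi^*\right)^{\rho} v_N \ge v_{C_j} \ge v_C$, which is exactly the R6 condition as stated in the paper's prose, namely $r_j \ge v_C$ (the displayed inequality under R6 is itself reversed relative to that prose). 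Your approach only ever delivers $r_j \ge v_j$, a lower bound by the standalone value rather than by $v_{C_j}$, so the deviation-proofness content of the theorem --- that the top contributor of any sub-coalition already receives at least what that sub-coalition could achieve on its own --- is never established.
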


The proofs for theorems \ref{theoR4} and \ref{theoStab} are identical to the proofs for individual rationality and stability for the additive scaled Shapley value by Sim and colleagues \cite{sim2020collaborative}.


\section{Comparison to Additive Shapley Rewards}

To highlight the practical differences and advantages of our ratio-based Shapley value, we compare it to the additive Shapley value used in the reward mechanism by Sim and colleagues \cite{sim2020collaborative}. Both approaches follow the same model-reward allocation scheme: Given a coalition value $v_C$, each participant receives a share of the model reward scaled relative to their contribution. To differentiate, we will from now on denote rewards according to our ratio-based reward scheme by $R_i$ and rewards according to the additive Shapley scheme introduced by Sim and colleagues by $A_i$. Thus::

\begin{align}
    R_i = \left( \frac{\phi^{rel}_i}{\phi^{rel, *}} \right) \times v_C; \quad  A_i = \left( \frac{\phi^{add}_i}{\phi^{add, *}} \right) \times v_C,
\end{align}

where $\phi^{rel}_i$ is the valuation of player $i$ according to our ratio based reward scheme and  $\phi^{rel, *} = \max_{j \in C} \phi^{rel}_j$, while  $\phi^{add}_i$ is the valuation of player $i$ according to the additive Shapley scheme with  $\phi^{add, *} = \max_{j \in C} \phi^{add}_j$. Furthermore,  $\rho \in (0, 1)$ modulates the strength of reward differentiation.

For a simple synthetic evaluation and comparison method,  we created 7 agents $\{1, ..., 7\}$. For each agent we set the standalone value as $v_i = \sqrt{i}$. Thus, agent $1$ has a value of 1 and agent 9 a value of 3. For any coalition, we then define its value as 
\begin{align}
    v_C = \sqrt{\sum_{i \in C}i}.
\end{align}

\begin{figure*}
	\centering
	\includegraphics[width=.9\textwidth]{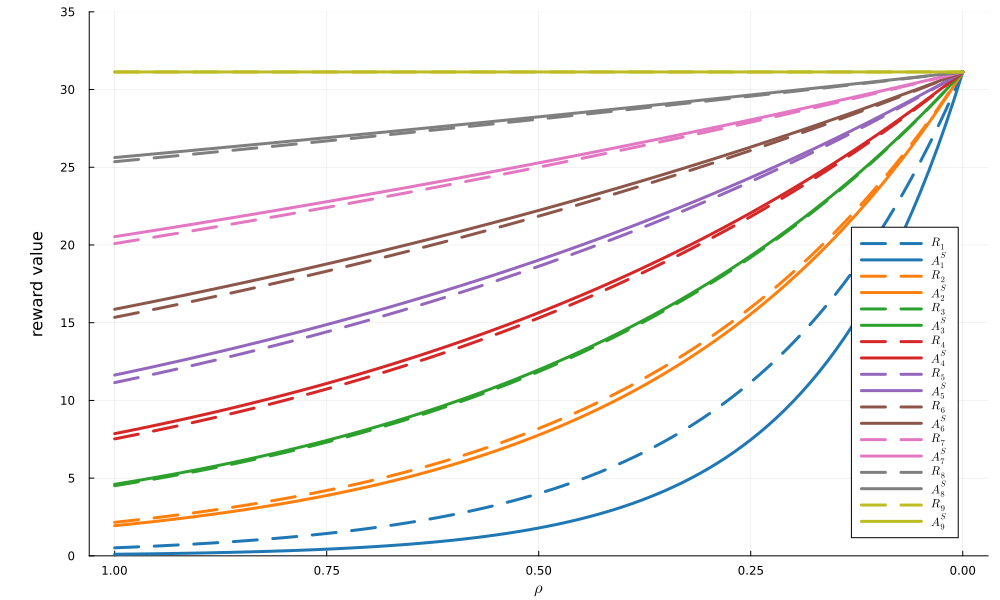}
	\caption{Rewards $R_i$ (solid lines) and $A_i$ (dashed lines) achieved by each agent $i$ within the grand coalition $N$ for different values of $\rho$.}
	\label{FIG:comp1}
\end{figure*}

The is the way our method emphasizes relative improvement over absolute marginal gains. Under the additive Shapley value, players receive credit proportional to the average additive increase their data brings across all coalitions. In contrast, the ratio-based value rewards players for improving the performance of coalitions proportionally, capturing scenarios where a small amount of data leads to large relative improvements, even if the absolute gain is modest. Naturally, the player with the largest additive marginal contributions is also the player with the largest relative contributions, just under both schemes the same player receives the maximal reward.

Notably, although the two formulations nearly converge in the extreme cases ($\rho = 0$ or $\rho = 1$), the reward curves differ significantly in the intermediate region. For example, certain players who are undervalued under the additive Shapley (due to small marginal contributions in high-value coalitions) receive noticeably higher rewards under our method. This is because they offer disproportionately high relative value in weaker coalitions. This effect is clearly visible for lower-ranked players (blue and orange) in Figure \ref{FIG:comp1}, whose rewards drop off more slowly in our method compared to the additive one. The reverse happens for players with large marginal contributions. While the receive significantly larger rewards, both under our and under the additive Shapley reward scheme, under our scheme their rewards are sightly smaller, Since their relative contributions are not quite as significant. Note, that such a change could not be brought about by simply shifting the $\rho$ parameter for the additive Shapley reward scheme. This would only increase or decrease all rewards, but could never decrease higher rewards while increasing lower rewards simultaneously.

This distinction has several useful consequences. It is more responsive to underrepresented data, players who contribute to low-performing coalitions and boosting their value significantly are rewarded more fairly. This can be crucial in settings with heterogeneous or unbalanced data distributions. In small or under-performing groups, a player who brings even modest data can produce significant relative improvements. The additive Shapley may undervalue such contributions, while the ratio-based value highlights them more appropriately. When collaborators have unequal resources or expertise, additive rewards may consistently favor the most resource-rich participants. Ratio-based rewards offer a fairness mechanism that can still incentivize smaller players who make relatively impactful contributions. Our reward scheme is also better suited to redundancy-prone domains: In contexts like sensor networks, health data sharing, or document digitization, where contributors often provide overlapping or similar data, additive rewards may overvalue already well-represented data. The ratio-based value down weights such redundancy by evaluating improvements relative to existing coalition strength.

Importantly, our formulation preserves the desirable incentive and fairness properties that underpin the scheme from Sim and collegues, such as individual rationality, feasibility, and symmetry, while offering an alternate lens through which to interpret and allocate value. This makes the ratio-based Shapley value a  valuable alternative in incentive-aware collaborative learning systems where the nature of value is better captured by proportional rather than additive gains.

Since the mathematical structure of the $\rho$-scaling remains unchanged, the same analytical tools, bounds, and stability analyses developed for additive Shapley rewards apply directly here with only the valuation function modified.


\section{Conclusion}

In this paper, we have extended the framework of incentive-aware model rewards introduced by Sim and colleagues \cite{sim2020collaborative} by proposing a new cooperative game-theoretic solution concept: The ratio-based Shapley value. Whereas the original framework relies on the classical additive Shapley value to assess data contributions and rewarding players in proportion to their marginal improvement in coalition value, our approach shifts the focus to proportional gains, rewarding participants according to the relative improvement they bring to any given coalition.

This alternative is motivated by scenarios where relative performance matters more than absolute value, such as early-stage modeling, settings with highly heterogeneous data, or domains with strong redundancy effects. We formally defined the ratio-based Shapley value, embedded it in the $\rho$-scaled reward mechanism of Sim and colleagues, and proved that it satisfies all the core rationality and fairness axioms used in their framework: Feasibility, non-negativity, individual rationality (under $\rho$-scaling), symmetry, uselessness, strict desirability, and strict monotonicity.

The key theoretical insight of this paper is that the Shapley-based mechanism proposed by Sim and colleagues is not unique in satisfying these axioms. Our ratio-based value demonstrates that fundamentally different interpretations of contribution, additive versus multiplicative can yield reward functions that are both distinct in behavior and equally valid in terms of axiomatic justification. This shows that the current axiomatic framework allows for a non-trivial space of compatible mechanisms. In other words, the axioms do not uniquely determine the Shapley value as they do in classical cooperative game theory.

This observation leads to a fundamental open problem: How can we characterize the full set of reward mechanisms that satisfy the incentive-aware axioms? A formal representation theorem, describing all functions that fulfill the given axioms, would provide clarity about the boundaries of the framework and guide the selection of mechanisms in different application contexts.

Alternatively, if one seeks to recover uniqueness or constrain the space of acceptable solutions, then the current axiom set must be refined or extended. For instance, one might introduce new axioms that explicitly favor additive fairness, or conversely, ones that encode proportional equity or context-sensitivity. Such axioms could help disambiguate between competing notions of fairness in replicable reward settings and lead to more tailored mechanisms for specific domains.

Beyond its theoretical implications, our result has practical relevance: It broadens the toolbox for designing collaborative learning protocols, especially in environments where model access is the reward and fairness must account for nonlinear, context-dependent contributions. Future work may explore hybrid reward schemes that interpolate between additive and ratio-based values, investigate empirical behaviors of these schemes on real-world collaborative learning datasets, or extend the axiomatic framework to account for additional structural assumptions, such as coalition costs, communication constraints, or trust.

In sum, this paper contributes a concrete alternative to an increasingly important line of research in cooperative AI and collaborative machine learning. By highlighting the non-uniqueness of current fairness axioms and providing a viable new mechanism, we open the door to a richer understanding of what fairness means in collaborative data-driven systems.



\begin{thebibliography}{10}
\providecommand{\url}[1]{\texttt{#1}}
\providecommand{\urlprefix}{URL }
\providecommand{\doi}[1]{https://doi.org/#1}

\bibitem{chalkiadakis2011computational}
Chalkiadakis, G., Elkind, E., Wooldridge, M.: Computational aspects of
  cooperative game theory. Morgan \& Claypool Publishers (2011)

\bibitem{conitzer24socialChoice}
{Conitzer}, V., {Freedman}, R., {Heitzig}, J., {Holliday}, W.H., {Jacobs},
  B.M., {Lambert}, N., {Moss{\'e}}, M., {Pacuit}, E., {Russell}, S.,
  {Schoelkopf}, H., {Tewolde}, E., {Zwicker}, W.S.: {Social Choice Should Guide
  AI Alignment in Dealing with Diverse Human Feedback}. arXiv e-prints
  arXiv:2404.10271 (2024). \doi{10.48550/arXiv.2404.10271}

\bibitem{conitzer23foundationsOFCooperativeAI}
Conitzer, V., Oesterheld, C.: Foundations of cooperative ai. Proceedings of the
  AAAI Conference on Artificial Intelligence  \textbf{37}(13),  15359--15367
  (Jul 2023). \doi{10.1609/aaai.v37i13.26791},
  \url{https://ojs.aaai.org/index.php/AAAI/article/view/26791}

\bibitem{hendrycks25AISafety}
Hendrycks, D.: Introduction to AI Safety, Ethis, and Society. Routledge (2025)

\bibitem{maschler1966characterization}
Maschler, M., Peleg, B.: A characterization, existence proof and dimension
  bounds for the kernel of a game. pacific Journal of Mathematics
  \textbf{18}(2),  289--328 (1966)

\bibitem{russell22artificial}
Russell, S.: Artificial Intelligence and the Problem of Control, pp. 19--24.
  Springer International Publishing, Cham (2022).
  \doi{10.1007/978-3-030-86144-5_3},
  \url{https://doi.org/10.1007/978-3-030-86144-5_3}

\bibitem{shapley1953value}
Shapley, L.S.: A value for n-person games. In: Kuhn, H.W., Tucker, A.W. (eds.)
  Contributions to the Theory of Games II, pp. 307--317. Princeton University
  Press, Princeton (1953)

\bibitem{shoham2008multiagent}
Shoham, Y., Leyton-Brown, K.: {Multiagent systems: Algorithmic, game-theoretic,
  and logical foundations}. Cambridge University Press (2008)

\bibitem{sim2020collaborative}
Sim, R.H.L., Zhang, Y., Chan, M.C., Low, B.K.H.: Collaborative machine learning
  with incentive-aware model rewards. In: International conference on machine
  learning. pp. 8927--8936. PMLR (2020)

\bibitem{wang2023data}
Wang, J.T., Jia, R.: Data banzhaf: A robust data valuation framework for
  machine learning. In: International Conference on Artificial Intelligence and
  Statistics. pp. 6388--6421. PMLR (2023)

\bibitem{wang2020principled}
Wang, T., Rausch, J., Zhang, C., Jia, R., Song, D.: A principled approach to
  data valuation for federated learning. Federated Learning: Privacy and
  Incentive pp. 153--167 (2020)

\bibitem{young1985monotonic}
Young, H.P.: Monotonic solutions of cooperative games. International Journal of
  Game Theory  \textbf{14}(2),  65--72 (1985)

\end{thebibliography}


\end{document}